\documentclass[11pt,a4paper]{article}

\usepackage[utf8]{inputenc}
\usepackage[margin=1in]{geometry}
\usepackage{amsmath,amssymb,amsfonts,amsthm}
\usepackage{graphicx}
\usepackage{booktabs}
\usepackage{hyperref}
\usepackage{cite}
\usepackage{tikz}
\usetikzlibrary{shapes.geometric, arrows, positioning}

\hypersetup{
    colorlinks=true,
    linkcolor=blue,
    citecolor=blue,
    urlcolor=blue
}

\theoremstyle{definition}
\newtheorem{definition}{Definition}
\newtheorem{proposition}{Proposition}
\newtheorem{requirement}{Requirement}

\title{\textbf{Themis Consensus Extension:}\\
MEV Mitigation by Randomized Delayed Execution and Intent-Hiding Transactions in Application-Specific Blockchains}

\author{
  \textbf{Shoeb Siddiqui}, \textbf{Mateusz Nowakowski}, \textbf{Stanislav Vozarik}, \\ \textbf{Gleb Urvanov}, \textbf{Peter Kris} \\[0.5em]
}
\date{July 2026}

\begin{document}

\maketitle

\begin{abstract}
Maximal extractable value (MEV) arises when privileged participants select, exclude, insert, or reorder pending transactions for private gain. We specify and analyze the Themis Consensus Extension v1, first published by Mangata in 2021. The design separates value extraction by reordering (VER) from value extraction by denial (VED). For VER, block construction and execution occur across consecutive producers: one producer commits a transaction set, and the next derives a publicly verifiable, deterministic, previously unpredictable seed and executes a seed-determined, dependency-preserving permutation. For selective VED, a user may encrypt a transaction for a designated builder and executor. The builder removes an outer layer and commits the opaque inner ciphertext; the executor reveals and executes the plaintext only after commitment.

Under selfish but non-colluding validators, an adversary below the underlying consensus fault threshold, secure cryptography, and accountable role performance, the construction limits unilateral post-commit ordering control and hides transaction intent from relays and the builder. It does not provide send-order or receive-order fairness, complete censorship resistance, resistance to builder-executor collusion, or per-transaction price guarantees. We analyze probabilistic extraction, spam, dependent transactions, decryption liveness, session boundaries, total denial, and threshold coalitions. We also document the initial Aura-based Substrate implementation and its subsequent transition to a BABE-based sr25519/VRF seed path, together with delayed execution, Fisher-Yates shuffling, and Xoshiro256++. The result preserves the original proposal while narrowing its claims to explicit assumptions.
\end{abstract}

\textbf{Keywords:} maximal extractable value, MEV, transaction ordering, encrypted mempool, decentralized exchange, application-specific blockchain, Substrate, randomized execution

\clearpage

\section{Introduction}
Block producers commonly determine both which pending transactions enter a block and the order in which those transactions execute. In stateful financial applications, these powers can be monetized through frontrunning, backrunning, sandwiching, selective censorship, replacement, and related strategies. The term maximal extractable value (MEV) captures value obtained through privileged control over transaction inclusion, exclusion, insertion, or ordering. Early empirical work documented decentralized-exchange arbitrage bots, priority gas auctions, and resulting risks to consensus stability \cite{daian2020flash}.

The original Themis proposal described this asymmetry using an analogy to insider trading: an intermediary observes pending user intent, controls a consequential execution decision, and may trade against the user \cite{mangata2021themis}. The analogy is motivational rather than a legal characterization. The technical concern is that a protocol actor has powers unavailable to an ordinary user.

MEV responses can be organized around two broad strategies. Redistribution mechanisms expose or auction ordering opportunities and attempt to distribute their proceeds. Flashbots' early program exemplified this direction through transparent extraction infrastructure, auctions, and proposed benefit distribution \cite{obadia2020frontrunning}. Minimization mechanisms instead attempt to remove or constrain the powers that create an extraction opportunity. Themis belongs to the latter family.

This paper formalizes the design originally published as "Introducing Themis Consensus Extension v1" \cite{mangata2021themis}. We refer to it as TCE-v1 to distinguish it from a separate Byzantine consensus protocol later published under the name Themis \cite{kelkar2023themis}. The name in the original proposal refers to the Greek figure associated with justice.

The design targets application-specific blockchains, especially decentralized exchanges (DEXs), where consensus, fee rules, transaction semantics, wallets, and execution can be changed together. It has two components:
\begin{enumerate}
    \item \textbf{Randomized delayed execution for VER.} A block producer commits the transaction set for block $n$, but the next producer determines a verifiable random seed and executes that set during block $n+1$. Thus, the participant choosing content does not unilaterally choose order, and the participant deriving order does not unilaterally choose content.
    \item \textbf{Optional intent hiding for selective VED.} A user encrypts a transaction for a designated builder and executor. The builder can prepare an opaque inner ciphertext for inclusion but cannot identify its economic intent. The executor reveals the transaction only after inclusion.
\end{enumerate}

The original article presented the design as a reasonably complete MEV solution using conventional cryptography and suggested adaptation to both proof-of-stake (PoS) and proof-of-work (PoW) networks. Academic treatment requires narrower language. The current design is most natural when future producers are known, assumes non-collusion, and cannot prevent an authorized builder from denying all or arbitrary opaque transactions. We therefore describe it as an MEV-minimization mechanism under an explicit threat model, not as a universal elimination of MEV.

\subsection{Contributions}
This paper makes four contributions:
\begin{itemize}
\item It restates the original VER/VED decomposition as a precise protocol objective and defines the system, cryptographic, consensus, and economic assumptions needed by TCE-v1.
\item It specifies the delayed-execution pipeline, seed requirements, dependency-preserving shuffle, and two-layer intent-hiding workflow in a form suitable for implementation review.
\item It analyzes the guarantees and non-guarantees of v1, including probabilistic extraction, Sybil accounts, spam costs, decryption accountability, total denial, collusion, and consensus-threshold failure.
\item It records the initial Aura implementation and the subsequent transition to BABE, including the different randomness and seed paths documented across the project materials.
\end{itemize}

No new benchmark or production measurement is claimed. Statements about implementation status and chronology are historical reports from the original project team and public materials; reproducibility still requires immutable commits or releases for the specific Aura and BABE revisions discussed.

\section{Scope, Terminology, and Design Position}

\subsection{Conventional and unconventional MEV}
The original proposal adopts the early Flashbots distinction between conventional and unconventional MEV \cite{flashbots2020taxonomy, obadia2020frontrunning}. Conventional MEV consists of protocol-intended compensation such as transaction fees and block rewards. Unconventional MEV consists of gains from transaction reordering, insertion, exclusion, or censorship. This paper uses MEV to mean the unconventional component unless stated otherwise.

This scope choice is normative. It treats fees and issuance as necessary protocol incentives rather than harms to remove. It also avoids classifying MEV as intrinsically "good" or "bad." The relevant question is whether a privileged actor can appropriate an opportunity that ordinary users cannot access on comparable terms. Arbitrage may improve prices, for example, yet its allocation is asymmetric when only a builder can censor a user's arbitrage and substitute its own.

\subsection{Miner, validator, producer, and user}
The 2021 article uses \emph{miner} for both PoW miners and PoS validators. This paper uses \emph{validator} or \emph{producer} except when discussing PoW adaptation. A \emph{user} is modeled as a light client or wallet participant. A validator may also transact as a user, but the roles are separated analytically.

The term \emph{builder} denotes the producer that selects a transaction set. The term \emph{executor} denotes the next producer that determines the order and applies the state transition. These are temporal roles. A single validator may occupy both roles in different slots, and repeated selection can place the same validator in adjacent roles.

\subsection{Random ordering is not arrival-order fairness}
TCE-v1 does not attempt to prove that a transaction received or sent first will execute first. It instead attempts to make execution order unpredictable before the transaction set is committed and symmetric with respect to eligible transactions after commitment. This is materially different from receive-order, send-order, or relative order-fairness studied by Aequitas and Wendy \cite{kelkar2020order, kursawe2020wendy}.

\subsection{Application-specific deployment}
The design assumes that the chain can coordinate consensus, transaction fees, failure rules, slashing, wallet encryption, and application semantics. This is especially relevant to a DEX-specific chain with fixed fees and no priority fee auction. A different application may expose different MEV opportunities. Every deployment therefore needs an application-level MEV analysis rather than treating the consensus extension as sufficient by itself.

The original article described Themis as permissionless and based only on traditional cryptography. The cryptographic ingredients are conventional, but permissionlessness is not a standalone security property. It depends on the validator-selection mechanism, role discovery, key publication, transaction access, and the absence of gatekeeping in the protected submission path.

\section{System and Threat Model}

\subsection{Entities and slot pipeline}
Time is divided into numbered production slots $n=1,2,\dots$. Let $P_n$ denote the producer authorized for slot $n$. In the basic pipeline, $P_n$ serves two roles:
\begin{itemize}
\item as builder $B_n$, it selects and commits a transaction set $\mathcal{T}_n$ for block $n$;
\item as executor $E_{n-1}$, it derives the execution seed for the transaction set committed in block $n-1$, computes its execution order, and applies it to the prior state.
\end{itemize}

A public mempool or relay layer carries ordinary transactions. An optional targeted path carries encrypted transactions addressed to a known builder-executor pair. Consensus validates blocks, producer eligibility, seed proofs, execution results, and any evidence needed for slashing.

Each transaction $t$ has a sender account $a(t)$ and, when applicable, an account nonce $q(t)$. Transactions from one account may depend on previous transactions from that account. The protocol must therefore preserve valid nonce order while randomizing the interleaving between accounts.

\subsection{Consensus assumptions}
The safety and liveness of TCE-v1 inherit the assumptions of the underlying consensus protocol. The design assumes:
\begin{enumerate}
    \item the adversarial coalition remains below the consensus protocol's Byzantine or economic fault threshold;
    \item the authorized producer for each relevant slot can be identified and its public key is available;
    \item invalid seed proofs, invalid execution, or required-message omissions can be rejected or punished according to consensus rules;
    \item finality or fork-choice behavior is sufficient to define the committed transaction set before its execution order is finalized;
    \item the chain can represent the one-block execution delay without ambiguity in state roots, receipts, events, transaction status, and client queries.
\end{enumerate}

No consensus extension can reliably punish a coalition that already controls enough power to rewrite the ledger, suppress evidence, or redefine validity. The original article described this as the inability to stop a malicious majority. More generally, the limit is the fault threshold of the underlying consensus and governance system, which need not equal a simple numerical majority.

\subsection{Adversary model}
Version 1 primarily models a selfish, strategically acting validator that does not cooperate with the adjacent producer. The adversary may:
\begin{itemize}
\item inspect any plaintext information available to its role;
\item choose which transactions to include, subject to any protocol-level inclusion rules;
\item submit its own transactions and use multiple accounts;
\item withhold a block, seed, decryption, or execution result if the resulting penalty is acceptable;
\item exploit economic opportunities created by application state;
\item operate or influence relays, RPC endpoints, and search infrastructure.
\end{itemize}

The v1 security argument does not cover collusion between the builder that commits $\mathcal{T}_n$ and the executor that derives its order or reveals its encrypted contents. It also does not cover compromised wallets, broken cryptography, consensus-threshold control, governance capture, or side channels that reveal intent despite encryption.

\subsection{Cryptographic assumptions}
The order-seed mechanism requires a primitive with a deterministic or unique output for a given key and message, public verifiability, and unpredictability to parties lacking the secret key. A verifiable random function (VRF) is a natural realization \cite{micali1999vrf}. A deterministic, non-malleable unique-signature construction can serve a similar role if its exact security properties are established.

The intent-hiding path requires authenticated hybrid public-key encryption. Ciphertexts must be bound to the chain, session, target roles, transaction metadata, fee or resource commitment, expiry, and replay domain. Encryption must provide confidentiality and integrity, not merely obscurity. Key rotation and role changes must be aligned with ciphertext expiry.

\subsection{Economic assumptions}
The original design is coupled to a DEX with fixed transaction fees, no priority gas auction, exchange commissions, and charges for failed or canceled transactions. These rules make repeated extraction attempts costly. They do not cryptographically prevent spam or Sybil accounts. Economic claims therefore depend on fee levels, opportunity size, liquidity, account costs, and credible penalties.

\section{MEV Powers: Reordering and Denial}
The original proposal groups its targeted extraction surface into two operational powers.

\begin{definition}[Value extraction by reordering]
VER is value obtained by a participant that can determine or influence the relative execution order of transactions, including the insertion of its own transactions before or after a user's transaction.
\end{definition}

The label includes straightforward front- and back-running as well as order-sensitive interactions among contracts or application calls. It is deliberately broader than any single trading pattern.

\begin{definition}[Value extraction by denial]
VED is value obtained by a participant that can reject, delay, suppress, or replace a user's transaction. The participant may be a centralized relay, an RPC service, a mempool gateway, or a block producer.
\end{definition}

Selective denial is especially powerful when the actor sees a pure-gain opportunity such as arbitrage, censors the user's attempt, and submits a substitute transaction. The original article used a centralized Ethereum access provider as an example of a relay with such power. The example illustrates architectural concentration and does not allege specific misconduct.

This two-power decomposition is a design lens rather than a proof that every MEV strategy has a unique classification. Some attacks combine both powers. A sandwich, for example, may require insertion and ordering, while private order flow may also affect inclusion.

\subsection{Three levels of denial mitigation}
The original proposal distinguishes three levels, in increasing strength:
\begin{enumerate}
    \item validators \emph{should not} deny transactions, relying on policy or incentives;
    \item validators \emph{do not know which} transaction to deny, because intent is hidden;
    \item validators \emph{cannot} deny transactions, because inclusion is enforced independently of their discretion.
\end{enumerate}

The v1 encryption mechanism targets the second level. It reduces selective censorship based on transaction purpose. It does not achieve the third level because a builder can still omit all ciphertexts, a random subset, or every transaction except its own.

\section{Mitigating Reordering with Delayed Randomized Execution}

\subsection{Protocol overview}
Standard block production combines content selection and execution-order selection. TCE-v1 separates them in time. Transactions included in block $n$ are executed while block $n+1$ is being produced or imported. The producer of block $n+1$ derives a seed that was not available to the producer of block $n$ when the transaction set was committed.

Figure \ref{fig:pipeline} shows the pipeline. The one-slot delay creates what the original article called a "two-block HEAD": the latest block contains a committed but not yet executed transaction set, while the preceding set has obtained its final execution order and state transition. The article described this as doubling block-execution time. More precisely, the pipeline adds one block interval to application-visible completion while allowing execution and construction to overlap.

\begin{figure}[htbp]
\centering
\begin{tikzpicture}[node distance=1.8cm, auto, >=latex', every node/.style={font=\small}]
    \node [draw, rectangle, rounded corners, text width=3cm, align=center] (slot1) {\textbf{Slot $n$}\\Producer $P_n$\\[0.3em]Builder $B_n$\\commit $\mathcal{T}_n$};
    \node [draw, rectangle, rounded corners, right=1.5cm of slot1, align=center] (slot2) {\textbf{Slot $n+1$}\\Producer $P_{n+1}$\\[0.3em]Executor $E_n$\\derive $s_n$; execute $\mathcal{T}_n$\\[0.3em]Builder $B_{n+1}$\\commit $\mathcal{T}_{n+1}$};
    \node [draw, rectangle, rounded corners, right=1.5cm of slot2, align=center] (slot3) {\textbf{Slot $n+2$}\\Producer $P_{n+2}$\\[0.3em]Executor $E_{n+1}$\\derive $s_{n+1}$; execute $\mathcal{T}_{n+1}$};

    \draw[->, thick] (slot1) -- node[above] {$\mathcal{T}_n$ fixed} (slot2);
    \draw[->, thick] (slot2) -- node[above] {$\mathcal{T}_{n+1}$ fixed} (slot3);
\end{tikzpicture}
\caption{Two-stage pipeline for VER mitigation. Block content is committed in one slot and executed in the next using a seed derived by the subsequent producer.}
\label{fig:pipeline}
\end{figure}

\subsection{Abstract specification}
Let $C_n$ be the consensus commitment to block-$n$ content, including $\mathcal{T}_n$ and a protocol-defined seed input $d_n$. Let $s_{n-1}$ be the previous seed-chain value and let $\rho_n$ denote consensus randomness required by the implementation. Producer $P_{n+1}$ evaluates a unique-output primitive:
\begin{align}
x_n &= \text{Domain} \parallel H(C_n) \parallel s_{n-1} \parallel d_n \parallel P_n \\
(y_n, \pi_n) &= \text{Eval}(sk_{P_{n+1}}, x_n) \\
s_n &= H(y_n) \\
\sigma_n &= \text{Shuffle}(\mathcal{T}_n, s_n) \\
S_{n+1} &= \text{Execute}(S_n, \sigma_n)
\end{align}

Every validating node verifies $\pi_n$, recomputes $s_n$ and $\sigma_n$, and checks the resulting state transition. The seed-chain update is committed with block $n+1$.

\begin{requirement}[Commit then randomize]
The input to the unique-output primitive must be fixed by consensus before the output that determines $\sigma_n$ becomes available to the builder of $\mathcal{T}_n$.
\end{requirement}

The original blog reflects the initial Aura implementation and describes the builder as providing a seed that the next producer signs. During the subsequent transition to BABE, the implementation adopted an sr25519/VRF path whose input combines the prior seed with BABE epoch randomness \cite{mangata2021note}. Binding the seed input to $C_n$ or its hash is a prudent defense against ambiguity, equivocation, and cross-block replay. The available materials do not establish that every revision in either implementation phase used the exact commitment binding formalized above, so it is presented here as a hardened specification requirement rather than a verified historical fact.

\subsection{Why the role split helps}
Under non-collusion, $B_n$ cannot predict the unique output of $P_{n+1}$ while selecting $\mathcal{T}_n$. It therefore cannot reliably place an attacker transaction at a chosen execution index. Conversely, $P_{n+1}$ observes the committed set before deriving or publishing the order, but it cannot add a new transaction to that already committed set.

\begin{proposition}[Conditional unilateral-order resistance]
Assume that (i) $\mathcal{T}_n$ is binding before $s_n$ is known to $B_n$, (ii) the unique-output primitive does not permit output grinding, (iii) $B_n$ and $P_{n+1}$ do not collude, and (iv) the shuffle is deterministic and correctly validated. Then neither $B_n$ nor $P_{n+1}$ can unilaterally choose an arbitrary execution permutation of $\mathcal{T}_n$.
\end{proposition}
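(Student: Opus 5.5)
The argument splits into two cases, one per actor. In each case we fix the inputs that actor does not control and show that the remaining freedom cannot force an arbitrary target permutation. The central object is the deterministic map
\[
(\mathcal{T}_n, s_n) \mapsto \sigma_n = \text{Shuffle}(\mathcal{T}_n, s_n),
\]
which by (iv) every validating node recomputes.

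\emph{Case $B_n$.} When $B_n$ commits $\mathcal{T}_n$, the seed input $x_n$ contains $H(C_n)$, $s_{n-1}$, $d_n$ and $P_n$. By (i) and Requirement~1, $x_n$ is fixed before $s_n$ is available to $B_n$. Since $sk_{P_{n+1}}$ is held by a non-colluding party (iii), the unpredictability of the unique-output primitive makes $y_n$, and hence $s_n = H(y_n)$, computationally indistinguishable from random to $B_n$.

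The step I expect to be the main obstacle comes next. $B_n$ may still try to grind over candidate sets $\mathcal{T}_n$ or seed inputs $d_n$. Each candidate yields a fresh unpredictable output, and $B_n$ cannot evaluate $\text{Eval}$ itself to test the candidates offline. Its choice is therefore a commitment made without knowledge of the resulting order. For the statement "cannot choose an arbitrary permutation" it suffices to show the following: for a target permutation $\tau$ of a set with at least two accounts whose interleaving is not forced by nonce order, the probability that $\sigma_n = \tau$ is bounded away from $1$. This uses the shuffle's dependence on $s_n$, meaning different seeds yield different interleavings on a non-negligible fraction of seeds, together with pseudorandomness. I would state this non-degeneracy of the shuffle as an implicit part of (iv) and note the degenerate single-account case explicitly.

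\emph{Case $P_{n+1}$.} By (i), $\mathcal{T}_n$ is already bound in $C_n$, so $P_{n+1}$ cannot add, drop or replace transactions. Its only remaining lever is the seed. By (ii) and uniqueness, for the fixed $x_n$ there is exactly one $y_n$ with a verifying proof $\pi_n$. Any other value is rejected, since validators check $\pi_n$ and recompute $s_n$ and $\sigma_n$ (iv). Consequently $\sigma_n$ is a fixed function of $(\mathcal{T}_n, x_n, sk_{P_{n+1}})$, and $P_{n+1}$ may only publish it or withhold it. Withholding selects no permutation; it is a liveness or denial action handled by the penalty assumptions. Equivocation on the execution result is rejected by (iv). Thus $P_{n+1}$ knows the order but cannot choose it.

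Finally, the two cases are combined. Under (iii) neither party's choice depends on the other's private information, so no single party controls both the set and the seed. I would also remark that the guarantee is exactly "cannot choose an arbitrary permutation," and not "cannot influence": $B_n$ can still shape the distribution through content and account structure, and $P_{n+1}$ retains a publish-or-withhold option.
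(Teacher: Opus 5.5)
Your proposal is correct and follows the same route as the paper's proof. Both use a case split: $B_n$ commits the set before the unpredictable unique output exists, while $P_{n+1}$ obtains that output only after the set is bound and can merely publish or withhold the single prescribed permutation. Your version is more explicit than the paper's short argument, most usefully in stating the shuffle's non-degeneracy and the single-account degenerate case; the paper leaves this implicit, but it is genuinely needed, since a seed-independent shuffle would let $B_n$ fix the order simply through how it lists $\mathcal{T}_n$.
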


\begin{proof}
The builder fixes the transaction set before learning the unique output that determines order. It can influence membership but cannot select an arbitrary order conditional on the realized output. The executor learns or computes the output only after the set is fixed. It can determine the protocol-prescribed permutation but cannot alter membership without invalidating the prior commitment. Unilateral arbitrary permutation therefore requires violating at least one assumption.
\end{proof}

This proposition does not eliminate all influence. The builder still chooses the set, the executor may abort or withhold, and either actor may exploit information available before its role begins. The protocol converts direct order choice into a constrained, potentially punishable availability decision.

\subsection{Seed uniqueness, non-malleability, and withholding}
The seed primitive must give one valid output for a given key and message, or otherwise prevent the producer from sampling several valid outputs and choosing the most profitable. A deterministic signature was the mechanism described in the original article. The linked prototype note uses an sr25519 key pair with VRF verification. VRFs provide public verification of a pseudorandom output and proof \cite{micali1999vrf}.

Uniqueness does not prevent an executor from computing the only valid output and then refusing to publish it when the outcome is unfavorable. Consensus must treat refusal as a missed block, invalid block, slashable omission, or other accountable failure. This is a liveness and incentive problem, not one solved by deterministic cryptography alone.

The seed chain also needs domain separation. Inputs should include chain identity, protocol version, role, block commitment, and slot or epoch information. Without domain separation, the same signed or VRF-evaluated message could be reused across contexts.

\subsection{Dependency-preserving randomization}
Let $\text{Adm}(\mathcal{T}_n)$ be the set of permutations that preserve the nonce order of every sender:
\begin{equation}
\text{Adm}(\mathcal{T}_n) = \{ \sigma : a(t_i) = a(t_j) \land q(t_i) < q(t_j) \Rightarrow t_i <_{\sigma} t_j \}.
\end{equation}

The protocol requires $\text{Shuffle}(\mathcal{T}_n, s_n) \in \text{Adm}(\mathcal{T}_n)$. Thus, Alice's transaction 1 remains before Alice's transaction 2, while Alice's and Bob's queues may be interleaved.

The public engineering note reports Fisher-Yates shuffling with Xoshiro256++ as the pseudorandom generator \cite{blackman2021scrambled, durstenfeld1964algorithm, mangata2021note}. Fisher-Yates is unbiased over an unconstrained list when its random-index sampling is unbiased. Preserving account order introduces a constrained-permutation problem. The exact distribution over $\text{Adm}(\mathcal{T}_n)$ depends on whether the implementation shuffles groups, repairs a full shuffle, or samples queue heads directly. This detail is security-critical because a biased interleaving may create statistical advantages. It should be specified and tested rather than inferred from the phrase "Fisher-Yates shuffle."

One explicit sampler is to partition transactions into nonce-ordered sender queues and repeatedly choose one active queue head using seed-derived randomness. If a queue is selected, its head is appended to the output and the next item in that queue becomes eligible. This preserves nonce order, but uniform choice among active queues is not uniform over all linear extensions. A deployment should state which fairness object it intends to approximate and measure any resulting positional bias.

\subsection{Probabilistic extraction}
Randomized ordering does not make extraction impossible. A participant can submit multiple candidate transactions to increase the chance that at least one receives a favorable position. The original proposal calls this \emph{probabilistic extraction} and argues that equal access to the lottery is fairer than exclusive builder discretion.

Per-account nonce preservation limits one simple strategy. If an account submits many sequential transactions, later transactions cannot leapfrog its first transaction. In a victim-relative race, only the account's current head may compete for the earliest position. However, the participant can create multiple accounts, so Sybil identities restore multiple independent queue heads.

The defensible guarantee is therefore conditional symmetry: if the admissible-order sampler is unbiased for its stated fairness objective, inclusion is symmetric, accounts have comparable costs, and no participant has extra information, eligible queue heads receive equal ordering probability. This is not identity-level equality and not proof of equal economic outcomes.

\subsection{Latency and client semantics}
Delayed execution changes more than consensus internals. A client submitting into block $n$ cannot treat inclusion in $n$ as final execution. Receipts, events, balance effects, pool reserves, nonce progression, and failure status become visible only after the next execution stage. Indexers and JavaScript APIs may need to fetch both block $n$ and block $n+1$ to reconstruct the transaction's effective order and result. Wallet interfaces should distinguish committed, ordered, and executed states.

Forks add a further requirement. The seed and execution result must be tied to the canonical predecessor commitment. A reorganization must not permit a seed proof or execution order from one fork to validate against a different transaction set.

\section{Mitigating Selective Denial with Intent Hiding}

\subsection{Targeted two-layer encryption}
To make a builder unable to identify which transaction is profitable to censor, the user targets a known pair consisting of builder $B_n$ and executor $E_n = P_{n+1}$. Abstractly, the wallet computes
\begin{align}
c_E &= \text{Enc}(pk_{E_n}, t \parallel m_E), \\
c_B &= \text{Enc}(pk_{B_n}, c_E \parallel m_B),
\end{align}
where $m_E$ and $m_B$ bind the ciphertext to chain, session, target slots, fee or resource limit, expiry, and replay domain. In practice, $\text{Enc}$ should be hybrid authenticated encryption: a fresh symmetric key protects the payload, and the role's public key encapsulates that symmetric key.

The workflow is shown in Figure \ref{fig:intent-hiding}.

\begin{figure}[htbp]
\centering
\begin{tikzpicture}[node distance=2cm, auto, >=latex', every node/.style={font=\small}]
    \node [draw, rectangle, rounded corners, align=center] (wallet) {User wallet\\encrypt for\\$B_n$ and $E_n$};
    \node [draw, rectangle, rounded corners, right=1.2cm of wallet, align=center] (builder) {Builder $B_n$\\remove\\outer layer};
    \node [draw, rectangle, rounded corners, right=1.2cm of builder, align=center] (block) {Block $n$\\commit\\opaque $c_E$};
    \node [draw, rectangle, rounded corners, right=1.2cm of block, align=center] (executor) {Executor $E_n$\\open, vali-\\date, execute};

    \draw[->, thick] (wallet) -- (builder);
    \draw[->, thick] (builder) -- (block);
    \draw[->, thick] (block) -- (executor);
\end{tikzpicture}
\caption{Two-layer intent-hiding path for selective VED mitigation. Encryption reduces the builder's ability to identify a profitable target, but does not make inclusion unavoidable.}
\label{fig:intent-hiding}
\end{figure}

The user sends $c_B$ through the mempool or a compatible relay. The builder removes the outer layer, obtains $c_E$, and includes that opaque inner ciphertext in $\mathcal{T}_n$. The executor removes the inner layer after the set is committed, validates the plaintext transaction, and either executes it or records a deterministic failure under the chain's transaction rules.

The original article says the executor is "forced" to decrypt. More precisely, the protocol can make non-decryption detectable and punishable. Cryptography cannot force an offline or malicious validator to reveal a key-dependent plaintext. Consensus and slashing must supply accountability.

\subsection{Why submission is not node-agnostic}
Two-layer role encryption requires the wallet to know the relevant future builder and executor public keys. The ciphertext is therefore targeted, not a universal mempool object decryptable by whichever validator happens to produce the block. This requirement is easiest to satisfy in a PoS or authority schedule where future producers are predictable.

The original proposal, written against the initial Aura implementation, notes that an Aura-style schedule can let a client target a future execution slot. Aura assigns authors to time slots from a known authority set. The design can be adapted to other schedules only if equivalent role discovery, key availability, expiry, and re-targeting rules are provided. The original assertion that PoW support requires only reasonable modification remains a design hypothesis, not a complete PoW protocol.

\subsection{What each role learns}
A relay sees the outer ciphertext and associated network metadata. The builder sees the inner ciphertext after removing its layer. The executor eventually learns the plaintext. Consensus validators may learn the plaintext during validation, depending on whether decryption is published, reproduced, or embedded in execution. The protected path therefore delays disclosure; it does not necessarily provide lasting transaction privacy.

Metadata can reveal intent even when payloads are encrypted. Ciphertext size, sender identity, timing, target slot, fee class, network path, and whether a user selected the protected path may all be informative. Padding, batching, private transport, and common fee classes may be needed to make the "does not know what to deny" claim meaningful.

\subsection{Optional use and fixed resource charges}
Encryption is optional. It should be used when intent exposure creates material risk and the expected protection justifies additional cryptographic and execution cost. An encrypted payload may decode to an invalid, expired, underfunded, or otherwise unexecutable transaction. Since the builder cannot inspect it in advance, encrypted transactions must carry an enforceable fixed processing charge or prepaid resource bound.

The original DEX design used fixed transaction fees rather than a priority gas market. This does not mean execution is free. It means the cost is protocol-defined rather than competitively bid for ordering. Failed or canceled exchange transactions still pay the applicable fee or commission. The article's statement that fee-estimation algorithms are unnecessary should be read narrowly: priority-bid estimation is unnecessary, while wallets still need balance and resource estimation.

\subsection{Decryption accountability and liveness}
Once an encrypted transaction has been included for its designated executor, failure to publish the required decryption or execution result should create objective evidence. A complete implementation needs to define:
\begin{itemize}
\item the exact message and deadline the role must publish;
\item how validators distinguish malformed ciphertext from malicious non-decryption;
\item what data constitutes a slashable proof;
\item how the chain proceeds if the designated role is offline;
\item how keys and ciphertexts expire at session boundaries;
\item whether the user proves well-formed encryption before inclusion without revealing content;
\item how fees are charged when decryption or plaintext validation fails.
\end{itemize}

The original proposal suggests that a non-designated builder may still include a targeted ciphertext, thereby placing an obligation on the designated builder-executor pair when they are next able to process it. If a designated builder does not remove its layer, the ciphertext can remain in the mempool until that builder is elected again. This creates a new denial and liveness surface. To avoid undecryptable legacy ciphertexts after validator-set or key changes, encrypted transactions must not cross session boundaries unless the protocol supplies explicit re-encryption or recovery.

This mechanism does not meet level-three denial resistance. A builder can deny all encrypted transactions without knowing which one is economically valuable. It can also admit only one opaque item, hoping that its own plaintext transaction captures an opportunity.

\subsection{Role repetition and key lifecycle}
If the same validator is selected as both builder and executor for a target pair, the intended information split collapses. The protocol should reject such targeting, assign a different decryption role, or state that those slots do not provide selective-VED protection. Similar care is needed when one operator controls several validator identities.

Role keys should be distinct from consensus signing keys where operational security warrants separation. Public keys need authenticated publication, versioning, activation height, expiry, and revocation rules. A client that encrypts to a stale or wrong key must receive a deterministic outcome rather than leaving an immortal ciphertext in the network.

\section{Security Analysis}

\subsection{Conditional properties}
Table \ref{tab:security} separates intended properties from assumptions and residual failures.

\begin{table}[htbp]
\centering
\small
\begin{tabular}{p{3.5cm} p{5.5cm} p{5.5cm}}
\toprule
\textbf{Property} & \textbf{Required assumptions} & \textbf{Residual risk} \\
\midrule
Post-commit order opacity & Binding transaction set; unpredictable unique output; validated deterministic shuffle & Abort, withholding, biased constrained shuffle, collusion \\
\addlinespace
Content/order separation & Different non-colluding roles for set selection and seed derivation & Builder controls inclusion; executor controls availability \\
\addlinespace
Selective-censorship reduction & Confidential authenticated encryption; intent not leaked by metadata & Total or random denial; traffic analysis; malformed ciphertext \\
\addlinespace
Per-sender dependency preservation & Correct nonce validation and admissible-order construction & Statistical advantage across Sybil accounts; constrained-order bias \\
\addlinespace
Decryption accountability & Public deadlines, evidence rules, and credible slashing & Offline roles, weak penalties, session or key expiry \\
\addlinespace
Consensus safety & Adversary below the underlying fault threshold & Coalition at or above the threshold can suppress or rewrite evidence \\
\bottomrule
\end{tabular}
\caption{Conditional properties and residual risks of TCE-v1.}
\label{tab:security}
\end{table}

\subsection{Builder strategies}
A builder cannot choose the exact post-commit order under the assumptions of Proposition 1, but it can shape the transaction set. It may omit a victim, omit competitors, include several attacker accounts, include transactions whose validity depends on likely order, or withhold the entire block. It can also choose among mempool views or timing cutoffs if those choices are not protocol-defined. The commitment therefore removes one dimension of discretion without removing inclusion power.

The strongest defensible claim is that a non-colluding builder cannot condition an exact ordering choice on the next producer's unpredictable unique output. It can still choose a set whose many possible orders are favorable in expectation.

\subsection{Executor strategies}
An executor cannot add to the committed set, but it may learn the realized permutation before publishing a block. If the result is unfavorable, it can withhold, miss its slot, or attempt to create an invalid seed. Uniqueness makes output substitution detectable but does not remove the abort option. Penalties must exceed the expected gain from selective withholding, taking into account repeated-game effects and the probability of successful evidence inclusion.

An executor may also exploit transactions already present in its own newly built block $n+1$ based on the state resulting from executing $\mathcal{T}_n$. The pipeline should specify whether the new block's membership is chosen before, during, or after the prior set is executed, and what state information is available at each step. Otherwise, MEV can migrate across the stage boundary.

\subsection{Seed grinding and constrained choice}
Even a unique-output primitive can be undermined if the producer controls an input with several valid values. Examples include choosing among alternative block headers, timestamps, auxiliary messages, parent candidates, or transaction-set encodings. The seed message must therefore derive from canonical, consensus-fixed inputs. Any producer-controlled field included before commitment creates a grinding surface.

Repeated eligible producers create another concern. If one operator expects to control several future slots, it may condition current inclusion on seeds it can compute for its own future keys. Unpredictability must be evaluated against the actual role schedule and operator-level key ownership, not only nominal validator identities.

\subsection{Total-denial attack}
A malicious builder can exclude every user transaction except one attacker transaction. If the captured opportunity exceeds the foregone fees and expected penalty, the attack is profitable. Intent hiding does not prevent it because the builder need not identify a target.

The original article observes that if an arbitrage remains available after a builder declines it, the next builder may have a fair claim to capture it. This can reduce the first builder's incentive when opportunities persist, but many opportunities are transient or can be consumed by the attacker's own included transaction. Stronger inclusion guarantees require mechanisms outside v1, such as inclusion lists, multi-proposer construction, encrypted broadcast with availability proofs, or explicit censorship penalties.

\subsection{Builder-executor collusion}
If $B_n$ and $E_n$ cooperate, they can combine content control, seed knowledge, and both decryption layers. The pair may reveal transaction intent before execution, selectively exclude or copy a transaction, and coordinate around the derived order. Version 1 explicitly assumes such cooperation does not occur.

In 2024, after the original October 2021 blog publication, Alnajjar et al., a separate research group with no role in the original Mangata/Themis work, analyzed the two-tiered architecture and proposed publicly verifiable decryption as an extension \cite{alnajjar2024mitigating}. This third-party extension is discussed only as subsequent related work; it is not evidence of the original protocol's design or implementation. Verifiable decryption can make an incorrect opening detectable, but it does not by itself stop colluding roles from learning plaintext early or choosing not to include it. Protecting against adjacent-role collusion requires a different trust split, threshold or committee decryption, delayed-reveal cryptography, role randomization, or detection and punishment tied to observable behavior.

\subsection{Consensus-threshold coalitions}
The original article says no Byzantine-tolerant protocol can stop or punish collusion by a malicious majority. The exact boundary depends on the underlying protocol. A coalition at or above the safety, liveness, or economic threshold may censor evidence, finalize invalid execution, rewrite transaction commitments, or exempt itself from penalties. Governance capture may lower the practical threshold if the coalition can change slashing or validity rules.

The protocol should therefore state its guarantees as conditional on the same fault assumptions as consensus, plus the stronger non-collusion assumption between adjacent roles. The latter may fail even when the consensus threshold remains honest.

\subsection{Privacy boundaries}
Two-layer encryption hides transaction content from relays and the builder until the designated opening stage. It does not automatically hide sender identity, ciphertext size, timing, target slot, fee class, network path, or whether a user chose the protected path. Private mempool research emphasizes that hiding order flow changes information, incentives, price discovery, and collusion opportunities \cite{rondelet2023private}.

Because wallet behavior is part of the security boundary, light clients and wallets must implement target-role discovery, authenticated encryption, expiry, fee commitment, replay protection, and fallback behavior. A malicious or compromised wallet can reveal intent before encryption, target a colluding pair, or omit domain-binding metadata.

\section{Economic and Application Analysis}

\subsection{Spam and Sybil economics}
Random ordering creates an incentive to buy more lottery tickets by sending more transactions or using more accounts. The application-specific DEX setting addresses this economically rather than cryptographically:
\begin{itemize}
\item every included transaction pays a fixed fee, including failed transactions;
\item exchange attempts may pay a percentage commission, so repeated attempts reduce net extraction profit;
\item no priority gas market is needed to purchase an earlier position;
\item the maximum rational spam volume is bounded by the expected gain net of fees, commissions, capital costs, and penalties.
\end{itemize}

The original article argues that even one additional exchange transaction materially reduces the set of profitable extraction opportunities. This is plausible but parameter-dependent. A refereeable deployment claim requires measurements using the chain's fee schedule, trade sizes, liquidity, expected arbitrage distribution, and validator incentives.

For non-exchange transactions, the original proposal treats the ability to improve one's chance by paying for more attempts as fair because the opportunity is open to everyone. That is a normative equal-access claim. It does not account for unequal capital, account-creation costs, latency, information, infrastructure, or the ability to operate many validators.

\subsection{Dependent transactions}
Preserving sender nonce order supports multi-step workflows whose later calls depend on earlier calls. It also creates a known statistical attack: an actor can place many accounts or carefully structured queues into the batch and exploit the fact that only certain linear extensions are valid.

The original proposed hardening groups transactions by sender and shuffles subsets with the same index within each sender group. Conceptually, all first transactions compete in one stratum, all second transactions in another, and so forth. This prevents later transactions in a long single-account chain from purchasing additional early positions. It does not prevent multiple-account Sybil strategies, and it requires a precise rule for missing indices and cross-stratum dependencies.

A formal analysis should compare at least three distributions: a uniform permutation followed by rejection of nonce-invalid orders, a uniform sample from all admissible linear extensions, and a queue-head sampler. These distributions need not give the same probability to a sender's early positions.

\subsection{Price effects}
The original proposal argues that MEV redistribution commonly leaves an unsophisticated user at the end of the value chain, while random ordering yields statistically equitable prices in a sufficiently large set of trades. Randomization can remove a deterministic builder-selected disadvantage, but it cannot guarantee that every user receives a better price.

A defensible statement is conditional: if a batch is fixed, the order sampler is symmetric, no participant has privileged information, and price impact is the only relevant outcome, then no eligible transaction is assigned a systematically preferred index by the builder. Realized execution prices still depend on sampled order, trade size, slippage limits, liquidity, and other participants. A transaction may receive a worse price than under another protocol realization even when the expected allocation is less builder-biased.

\subsection{Conventional incentives remain necessary}
The protocol does not remove block rewards, transaction fees, or commissions. These are conventional MEV under the terminology adopted by the original article and are treated as necessary incentives. Their design matters because validators must be compensated for delayed execution, seed proofs, ciphertext handling, and state reconstruction.

Fixed fees remove the explicit bidding channel used by priority gas auctions, but they can also underprice expensive encrypted failures or overprice simple calls. A production design should meter worst-case decryption, validation, and failure handling while preventing data-dependent fees from leaking transaction intent.

\subsection{Application-specific completeness}
No consensus extension is automatically complete for every application. A DEX can expose cross-pool arbitrage, liquidation, oracle-update ordering, liquidity-addition timing, governance interactions, and cross-chain state dependencies. Other applications expose different surfaces. The original article correctly requires each implementation to enumerate its MEV opportunities and test whether the introduced measures are sufficient.

The VER/VED decomposition is useful for this audit. For every opportunity, designers should ask who can learn the relevant intent, who can change membership, who can influence relative order, what information arrives between commitment and execution, and whether an adjacent or cross-domain transaction can recreate the opportunity.

\section{Implementation Report}

\subsection{Public Substrate design note}
A later engineering note documents the BABE implementation of the Rust/Substrate design, in which execution of transactions included in block $N$ is delayed until block $N+1$ \cite{mangata2021note}. It records the following components:
\begin{itemize}
\item a modified block-building path that executes extrinsics from the previous block during creation of the current block;
\item randomized shuffling that preserves the original per-account order because later calls from one account may depend on earlier calls;
\item Fisher-Yates shuffling driven by Xoshiro256++ and initialized from a seed stored in blockchain state;
\item an sr25519 key pair and VRF verification so that other nodes can validate the producer's seed using the block author's public key;
\item a seed input composed from the prior seed and BABE epoch randomness;
\item seed injection as Substrate \texttt{InherentData}, with verification during block import;
\item a dedicated runtime \texttt{RandomSeedApi} for public seed access;
\item changes involving \texttt{sc-block-builder}, \texttt{sc-basic-authorship}, \texttt{sc-consensus-babe}, and \texttt{sc-service}.
\end{itemize}

The note gives a concrete ordering example. An original list containing Alice's first, second, and third transactions followed by Bob's first and second can become an interleaving such as Alice 1, Bob 1, Alice 2, Bob 2, Alice 3. The relative order within each account is preserved.

The same note states that a producer cannot calculate the seed for a block it is about to author until the preceding block exists, and that validators reject a block whose seed fails VRF verification. These statements support public verifiability and delayed availability. They do not by themselves prove absence of all grinding inputs or selective withholding.

\subsection{Implementation evolution from Aura to BABE}
The October 2021 article was written against the initial implementation, which used Aura consensus and was intended to connect as a Polkadot parachain \cite{mangata2021themis}. After that publication, the Mangata team transitioned the implementation over time to BABE. The linked engineering note documents this later BABE phase, including epoch randomness, inherent-data injection, and seed verification \cite{mangata2021note}. The two sources therefore describe successive implementation stages rather than conflicting accounts.

The protocol itself treats the consensus engine as an integration parameter. It needs a known or discoverable future role, a public verification key, canonical seed inputs, and a validity rule for the seed proof. For reproducibility, the final submission should cite immutable commits or releases for both the initial Aura phase and the later BABE phase, and identify which revision supports each implementation claim.

\subsection{Client and event reconstruction}
Delayed execution changes the relationship between block inclusion and execution events. The engineering materials describe JavaScript API work that retrieves the block containing the transactions and the following block containing their execution context, then reconstructs the shuffled order and associated events. This is not merely a user-interface detail. Explorers, wallets, accounting systems, and downstream protocols need a canonical mapping from a committed transaction to its later result.

A reproducible implementation should expose:
\begin{itemize}
\item the committed transaction-set identifier;
\item the seed and proof used for its order;
\item the deterministic ordered list;
\item success, failure, fee, and commission events in that order;
\item the state root before and after execution;
\item reorganization behavior for both commitment and execution blocks.
\end{itemize}

\subsection{Implementation status and reproducibility}
The original article reported that Themis had been fully implemented by the Mangata DEX blockchain in Substrate/Rust and linked a public repository \cite{mangata2021repo, mangata2021themis}. The repository has since moved or been renamed, so a durable paper should cite a tagged release or immutable commit rather than only a moving branch.

The public engineering note documents the delayed-execution and shuffle path in useful detail. The original article stated that a more technical VED specification would be published later. Consequently, the public sources support a stronger historical implementation claim for VER than for the complete two-layer VED workflow.

A technical presentation at Parity's Sub0 conference discussed the Substrate block-execution changes, with a later Mangata recap naming Mateusz Nowakowski as the presenter \cite{mangata2021techtalk}. For reproducibility, an arXiv release should ideally include:
\begin{itemize}
\item an immutable source commit and build instructions;
\item protocol test vectors for seed derivation and constrained shuffling;
\item adversarial tests for seed withholding, malformed ciphertext, and missed decryption;
\item benchmarks for throughput, latency, cryptographic cost, and event reconstruction;
\item economic simulations for spam, multi-account extraction, and fee calibration;
\item a statement of which VED components were implemented, tested, or only proposed.
\end{itemize}

\section{Related Work}

\subsection{MEV measurement and redistribution}
Daian et al.\ introduced the term \emph{miner extractable value} in their analysis of DEX arbitrage, priority gas auctions, transaction-ordering dependencies, and consensus instability \cite{daian2020flash}. Flashbots subsequently proposed transparent and permissionless infrastructure to illuminate extraction, democratize participation, and distribute benefits \cite{obadia2020frontrunning}. Such systems can reduce inefficient public bidding and expose MEV flows, but they do not necessarily remove the underlying ordering opportunity. TCE-v1 instead seeks to minimize unilateral extraction power at the execution layer.

\subsection{Order-fair consensus}
Aequitas formalizes transaction order-fairness and provides consensus constructions that augment consistency and liveness with ordering properties \cite{kelkar2020order}. Wendy provides modular fairness mechanisms and also shows that some intuitive fairness definitions are impossible \cite{kursawe2020wendy}. These works relate network observations or collective preferences to final order. TCE-v1 intentionally avoids receive- or send-order claims and uses a post-commit random order.

Kelkar et al.\ published a separate protocol named Themis that targets strong order-fair Byzantine consensus \cite{kelkar2023themis}. It is unrelated to the Mangata Themis Consensus Extension. The shared name makes the full term "Themis Consensus Extension" or the abbreviation TCE-v1 important in citations and metadata.

\subsection{Encrypted mempools and verifiable decryption}
Private or encrypted mempools hide transaction content until a commitment or ordering point. Their security depends not only on cryptography but also on decryption incentives, information flows, validator concentration, and collusion \cite{rondelet2023private}. The VED component is a targeted two-role encrypted-mempool design, with lower coordination complexity than a threshold committee but a stronger dependence on the honesty and availability of adjacent roles.

In 2024, Alnajjar et al., a separate research team not involved in the original 2021 Mangata/Themis work, studied the two-tiered architecture and proposed a verifiable-decryption extension \cite{alnajjar2024mitigating}. Their paper is third-party, subsequent related work. Its mechanism is not part of the 2021 v1 protocol or its original implementation, and it should not be read as evidence about the original team's design. It nevertheless highlights two open points: decryption must be publicly accountable, and collusion between adjacent roles requires explicit treatment.

\section{Known Attacks and Future Work}
The original proposal identifies three immediate attack classes:
\begin{description}
    \item[Statistical attack on non-exchange transactions.] Per-sender nonce preservation creates structured admissible orders. A participant may spend more on fees and accounts to increase the chance of a favorable position. The proposed next step is to group by sender and shuffle equal-index strata. A complete solution needs a formal sampler, a Sybil-cost model, and statistical tests for bias.
    \item[Total denial except one transaction.] A builder may exclude all user traffic and include only its own transaction when the gain exceeds foregone fees and penalties. Future work should combine intent hiding with enforceable inclusion or multi-party block construction.
    \item[Collusion.] Version 1 assumes malicious actors act independently. Future versions should detect or punish cooperation between builders, executors, relays, and searchers. Observable evidence is difficult because private cooperation may be behaviorally indistinguishable from ordinary inclusion choices.
\end{description}

The original roadmap also calls for better management of encrypted transactions, mitigation of encrypted-transaction failure, detection of cooperating malicious validators, and a direct response to builder-executor collusion. Additional research needed for an implementable v2 includes:
\begin{itemize}
\item session-safe key lifecycle, re-targeting, expiry, and recovery;
\item formal decryption evidence and penalty calibration;
\item unbiased or explicitly characterized sampling over nonce-constrained permutations;
\item protection against metadata and timing leakage;
\item PoW or unpredictable-proposer adaptations;
\item inclusion guarantees that address total denial;
\item empirical evaluation of latency, throughput, market quality, and validator revenue;
\item analysis of cross-domain and cross-block MEV not captured by one application batch;
\item operator-level collusion analysis when one entity controls several validator keys.
\end{itemize}

\section{Conclusion}
Themis Consensus Extension v1 separates block-content commitment from execution-order derivation and adds an optional two-layer intent-hiding path. For an application-specific chain with predictable producer roles, fixed transaction costs, accountable validators, and non-colluding adjacent producers, the design reduces two important unilateral powers: choosing an exact order after selecting content and selectively censoring a transaction after reading its purpose.

The resulting guarantee is best described as fairer execution opportunity under randomized ordering, not order fairness in the send- or receive-time sense. It can remove classes of deterministic frontrunning and selective denial under its assumptions, but it does not eliminate probabilistic extraction, arbitrary or total denial, role withholding, or collusion. Randomized order can improve expected treatment relative to a builder-chosen worst position, yet it does not guarantee a better realized price for every trade.

The design is most directly suited to PoS or authority systems in which future roles and public keys are known. Wallets and light clients are integral because they must target encryption to those roles. Each application-specific deployment must examine its own MEV surface, fee economics, and state dependencies. Finally, no extension can preserve fairness when an adversarial coalition exceeds the underlying consensus or governance fault threshold.

\section*{Acknowledgments}
The original public article states that the research was led by Gleb Urvanov and thanks Xinshu Dong, Luke Pearson, Will Wolf, Peter Kris, Marcin Gorny, and the Substrate Builders team for comments and ideas \cite{mangata2021themis}.

\appendix
\section{Protocol Questions Preserved from the Original Proposal}
This appendix collects the operational questions raised in the 2021 article and points to their academic treatment.

\begin{table}[htbp]
\centering
\small
\begin{tabular}{p{5.5cm} p{9cm}}
\toprule
\textbf{Original question} & \textbf{Answer in this specification} \\
\midrule
Is the protocol receive-order or send-order fair? & No. It uses post-commit randomized ordering; see Sections 2.3 and 5. \\
\addlinespace
How is probabilistic extraction handled? & Per-account nonce order limits repeated transactions from one account, while fees and commissions price attempts. Multiple accounts remain possible; see Sections 5.6 and 8.1. \\
\addlinespace
Does randomization encourage spam? & Yes. Fixed fees, failed-transaction charges, and commissions create an economic cap but not a cryptographic prohibition; see Section 8.1. \\
\addlinespace
How are dependent transactions handled? & Sender nonce order is preserved while accounts are interleaved; see Sections 5.5 and 8.2. \\
\addlinespace
How is future decryption guaranteed? & It is not cryptographically guaranteed. Inclusion creates an obligation whose violation must be provable and punishable; see Section 6.5. \\
\addlinespace
What if the designated pair does not decrypt? & The ciphertext may wait for the designated role, creating liveness attacks. It must expire before incompatible session or key changes; see Section 6.5. \\
\addlinespace
Can another builder include a targeted ciphertext? & The original design permits this, but the designated roles may still be needed later. The resulting obligation, storage, and expiry rules require specification; see Section 6.5. \\
\addlinespace
Can a user select execution time? & In predictable Aura-style schedules, a wallet can target a role pair or slot. This does not generalize automatically to all PoS or PoW systems; see Section 6.2. \\
\addlinespace
Does random ordering guarantee a better trade price? & No per-trade guarantee follows. It can remove builder-selected systematic disadvantage under symmetric assumptions; see Section 8.3. \\
\addlinespace
What attacks remain? & Nonce-structure statistics, total denial, role withholding, Sybil accounts, metadata leakage, and collusion; see Sections 7, 8, and 11. \\
\bottomrule
\end{tabular}
\caption{Original protocol questions and their formal treatment.}
\end{table}


\end{document}